\documentclass[conference]{IEEEtran}
\IEEEoverridecommandlockouts
\usepackage[none]{hyphenat}
\usepackage{cite}
\usepackage{graphicx}
\usepackage{amsmath,amssymb,amsthm}
\usepackage{booktabs}
\usepackage{balance}
\usepackage[colorlinks=true,linkcolor=blue,citecolor=blue,urlcolor=blue]{hyperref}
\usepackage{tikz}
\usetikzlibrary{arrows.meta,positioning,calc}
\definecolor{inkblue}{HTML}{225B82}
\definecolor{inkgreen}{HTML}{32745E}
\definecolor{inkorange}{HTML}{AF612C}

\graphicspath{{figures/}}

\newcommand{\Eerr}{\mathcal{E}_{\mathrm{err}}}
\newcommand{\Eobs}{\mathcal{E}_{\mathrm{obs}}}
\newcommand{\Uid}{U_{\mathrm{id}}}
\newcommand{\rca}{r_{\mathrm{cross,anti}}}
\newcommand{\drca}{\Delta r_{\mathrm{cross,anti}}}
\newcommand{\Aedge}{\mathcal{A}^{G}_{\mathrm{edge}}}
\newcommand{\wt}{\mathrm{wt}}
\newcommand{\Tr}{\mathrm{Tr}}

\newtheorem{proposition}{Proposition}
\newtheorem{lemma}{Lemma}

\begin{document}
\sloppy

\title{\LARGE{Distinguishing Coherent Crosstalk from Calibration Drift via Pauli-Transfer Signatures and Quantum Edge Detection}\thanks{Accepted for publication in the 2027 IEEE International Symposium on Hardware Oriented Security and Trust (HOST 2027).}}

\author{\IEEEauthorblockN{Syed Emad Uddin Shubha and Tasnuva Farheen}
\IEEEauthorblockA{Division of Computer Science and Engineering\\
Louisiana State University, Baton Rouge, LA, USA}}

\maketitle
\begin{abstract}
Multi-tenant quantum processors expose a pulse-level attack surface in which coherent crosstalk can mimic benign calibration drift at matched average gate infidelity. We present a structural verification and detection framework based on the residual Pauli transfer matrix (PTM). We prove that products of local unital, trace-preserving channels cannot mix weight-1 and weight-2 Pauli operators, even under coherent drift of arbitrary strength or axis. Building on established PTM-to-Hamiltonian relations, we prove that, for any specified qubit pair in an arbitrarily large system, the first-order map from its nine interaction coefficients to the antisymmetric cross-weight feature is an isometry up to scale. Thus, every interaction direction is locally observable near the identity. We also exhibit cancellation of this feature under large local rotations and prove that the full cross-weight norm is invariant under local unitary composition, providing a structural countermeasure. For calibrated small local rotations, shared randomized-Pauli measurements support periodic verification using the evaluated antisymmetric detector. In simulation, $16{,}384$ randomized settings yield a detection threshold of $\lambda_{\min}\approx0.13$ at a $5\%$ calibrated false-positive target, with approximately $M^{-1/2}$ scaling in the number of settings $M$. Further experiments characterize detection under relaxation, depolarization, native ZZ fluctuations, and time-dependent pulse dynamics. Finally, an FRQI encoding with Pauli-graph quantum Hadamard edge detection reproduces the classical structural edge score to numerical precision.
\end{abstract}

\begin{IEEEkeywords}
Quantum hardware security; Coherent crosstalk; Calibration drift; Pauli transfer matrix; Randomized-Pauli shadows; Edge detection; FRQI
\end{IEEEkeywords}

\section{Introduction}
\label{sec:introduction}

Quantum processors are increasingly offered as shared, cloud-scheduled
systems. In these environments, several users may access nearby qubits, and
pulse-level control can allow disturbances on one qubit to affect another.
Such disturbances may arise from cross-resonance leakage,
spectator-conditioned drives, imperfect echo cancellation, or unwanted
always-on coupling
\cite{sarovar2020crosstalk,shubha2025stealthy, oda2026sparse,rudinger2021sgst}.
These effects create a hardware-security concern because a weak coherent
interaction can be introduced during a victim gate without producing an
obvious failure.

The main challenge is that coherent crosstalk can resemble ordinary
calibration drift. A local over-rotation and a two-qubit interaction can be
chosen to produce the same average gate infidelity. Standard scalar
diagnostics, including average fidelity and randomized benchmarking error,
can then reveal that the gate is inaccurate but cannot determine whether the
cause is local drift or a nonlocal interaction. This ambiguity matters because
the two cases require different responses. Local drift may require
recalibration, while an unexpected two-qubit interaction may indicate control
leakage, faulty isolation, or an injected pulse-level disturbance.

Prior work has studied the characterization, mitigation, and calibration of
crosstalk in superconducting processors
\cite{sarovar2020crosstalk,oda2026sparse}, and
off-diagonal PTM entries have been related to local and two-qubit
Hamiltonian coefficients \cite{kaufmann2025coherent} and estimated from randomized product-Pauli data \cite{crupi2025lowdegree}. Recent security studies have also
shown that structured pulse-level perturbations can alter a victim
computation while remaining difficult to identify through conventional
performance metrics
\cite{shubha2025stealthy}. These results do not, however,
provide a security criterion for separating local coherent drift
from coherent crosstalk when their error magnitudes are matched; we build on
these characterization tools to formulate one. The Pauli transfer matrix describes how a channel transforms one Pauli
operator into another \cite{korotkov2013error,hantzko2024ptm}, and
randomized-Pauli and classical-shadow methods estimate many of its entries
from one shared dataset \cite{huang2020shadows,levy2021shadowqpt}. The
remaining question is which part of it reliably distinguishes local drift
from nonlocal coherent interaction.

We show that the relevant information lies in the signed antisymmetric
coupling between weight-1 and weight-2 Pauli sectors. Products of local
unital, trace-preserving errors, including coherent drift of arbitrary
strength, axis, and location, do not create such mixing, whereas a genuine
two-qubit coherent interaction does at first order. This defines a signed
Pauli-transfer signature for direction-resolved detection relative to an
attack-free calibration baseline.

We also study a quantum-native readout. Because neighboring PTM indices
need not represent related Pauli operators, we define a Pauli graph whose
edges connect labels that differ in one tensor factor, encode the
PTM-derived magnitude pattern as a flexible representation of quantum
images (FRQI) state \cite{le2011frqi}, and measure its variation over this
graph with quantum Hadamard edge detection (QHED)
\cite{yao2017qhed,shubha2024edge}. The signed PTM vector remains the
detector. \\
\noindent
This work addresses the following questions:

\begin{enumerate}
    \item[\textbf{RQ1.}]
    Can coherent two-qubit crosstalk be distinguished from local calibration
    drift when both produce the same average gate infidelity?

    \item[\textbf{RQ2.}]
    Does the proposed cross-weight Pauli-transfer signature capture every
    two-qubit Pauli interaction direction, or can some interactions remain
    invisible?

    \item[\textbf{RQ3.}]
    Can this signature be estimated with finite randomized measurements and
    remain effective under relaxation, depolarizing noise, native coupling
    fluctuations, and time-dependent pulse dynamics?

    \item[\textbf{RQ4.}]
    Can the resulting PTM feature be represented and measured through an
    FRQI and Pauli-graph quantum edge-detection circuit?
\end{enumerate}

\noindent \textbf{Contributions:}
In answer to these questions, this paper makes four contributions, and we
distinguish their levels of evidence explicitly. First, it proves an exact
algebraic result: the weight-1 to weight-2 block of the residual PTM is
identically zero for every product of local unital, trace-preserving error
channels, and any two-qubit Pauli generator populates this block at first
order (Section~\ref{sec:signature}). Second, building on the established
first-order PTM-to-Hamiltonian relation \cite{kaufmann2025coherent}, it
proves that the map from the nine interaction coefficients into this block
is an isometry up to scale for every system size, and it gives a full-block
invariant for large local rotations that cancel the antisymmetric
projection.
Third, it specifies a single operational detector, estimated from one shared
randomized-Pauli measurement dataset and calibrated on attack-free data, and
evaluates it empirically under realistic noise, finite measurement budgets,
fluctuating native ZZ coupling, and time-ordered pulse dynamics
(Sections~\ref{sec:detector} and~\ref{sec:evaluation}). Fourth, it
constructs a Pauli-graph QHED circuit that measures the structural pattern
of the PTM feature from an FRQI encoding and verifies the circuit against
the classical score (Section~\ref{sec:qhed}).

The PTM discriminator and the calibrated hypothesis test are the primary
security contribution; the FRQI and QHED construction is a secondary
representation result that improves neither detection accuracy nor
computational complexity. Figure~\ref{fig:overview} summarizes the
pipeline.

\begin{figure}[tb]
\centering
\begin{tikzpicture}[
  font=\footnotesize, >=Latex,
  box/.style={draw=inkblue,rounded corners=1.5pt,align=center,
    inner sep=3pt,text width=7.5cm},
  branch/.style={box,text width=3.25cm,minimum height=0.9cm},
  line/.style={->,line width=.55pt,draw=inkblue}]
\node[box,fill=inkblue!5] (model)
 {\textbf{Provider schedules periodic audits}\\
 Attack-free calibration; tenant-like audit contexts};
\node[box,below=3mm of model] (data)
 {Observed gate channel $\mathcal E_{\rm obs}$\\
 Shared randomized product-Pauli dataset: $M$ settings};
\node[box,below=3mm of data] (ptm)
 {Remove intended gate in postprocessing\\
 Residual PTM $R$; cross-weight blocks $R_{21},R_{12}$};
\node[branch,below left=4mm and -3.85cm of ptm,fill=inkblue!5] (primary)
 {\textbf{Evaluated detector}\\
 $x=\operatorname{vec}[(R_{21}-R_{12}^{T})/2]$\\
 Calibrated maximum-$z$ test};
\node[branch,right=4mm of primary,draw=inkgreen,fill=inkgreen!5] (counter)
 {\textbf{Rotation countermeasure}\\
 $b_{12}^2=\|R_{21}\|_F^2+\|R_{12}\|_F^2$\\
 Exact invariance; needs shot calibration};
\node[branch,below=3mm of primary] (decision)
 {Integrity alarm relative to baseline\\
 Small-rotation evaluation scope};
\node[branch,below=3mm of counter,draw=inkorange] (qhed)
 {\textbf{Secondary readout}\\
 $|x|\to$ FRQI $\to$ Pauli-graph QHED\\
 Classically compiled graph score};
\draw[line] (model)--(data);
\draw[line] (data)--(ptm);
\draw[line] (ptm.south) -- ++(0,-2mm) -| (primary.north);
\draw[line] (ptm.south) -- ++(0,-2mm) -| (counter.north);
\draw[line] (primary)--(decision);
\draw[line,dashed,draw=inkorange] (primary.east) -- ++(2mm,0)
  |- (qhed.west);
\end{tikzpicture}
\caption{Overview of the framework. One shared randomized-Pauli dataset
yields the residual PTM and the signed cross-weight vector $x$, which is
exactly zero for product local unital errors
(Proposition~\ref{prop:drift-zero}) and first order in any two-qubit Pauli
generator (Proposition~\ref{prop:attack-first-order}). The calibrated
direction-resolved test on $x$ carries the security decision within the
small-rotation scope; the full-block norm $b_{12}$ is invariant under local
rotations (Proposition~\ref{prop:full}), and the FRQI and
Pauli-graph QHED branch is a quantum-native readout of the
feature's structural pattern.}
\label{fig:overview}
\end{figure}

The method detects unexpected nonlocal coherent structure relative to a
calibrated baseline; it does not infer malicious intent, and all detection
results are hardware-motivated simulations at three qubits.
Section~\ref{sec:limitations} states the detection boundary.

\section{Background and Related Work}
\label{sec:related}

\subsection{Crosstalk characterization}

Crosstalk is a dominant error source in superconducting processors, and a
substantial literature addresses its detection and modeling. Sarovar et
al.~\cite{sarovar2020crosstalk} formalize crosstalk errors as violations of
locality and independence assumptions and propose protocols that detect
their presence.   Simultaneous gate set tomography distinguishes crosstalk mechanisms
  experimentally on trapped-ion and transmon hardware \cite{rudinger2021sgst}.
  Recent characterization work shows that benign transmon
hardware carries persistent parasitic ZZ coupling and two-level-system
mediated interactions that can be learned with a few parameters per qubit
pair \cite{oda2026sparse}. These studies target device bring-up and
calibration rather than the security question of whether an observed
coherent error is local or nonlocal at matched magnitude, which is the gap
this paper fills.

\subsection{Pulse-level attacks on shared hardware}

Security analyses of multi-tenant quantum computing have demonstrated that
pulse-level access enables crosstalk-based attacks. Prior work simulates
cross-resonance pulse injections that corrupt a victim circuit and characterizes stealthy variants that remain
below conventional detection thresholds \cite{shubha2025stealthy}. These
studies establish the attack surface but leave open how a defender can
attribute an observed error to an injected interaction rather than to
drift. As a countermeasure, that line of work proposes canary circuits: simple probe circuits executed periodically on idle qubits and flagged when their output deviates from a pre-calibrated baseline \cite{shubha2025stealthy}. Canaries inherit the scalar limitation, since a deviation shows that the device has changed but not whether the cause is drift or an injected interaction, and an attack matched to the calibration budget can hide inside the baseline tolerance. The protocol in this paper can be read as a structural canary that addresses this gap: the randomized-Pauli audit circuits play the canary role, while the cross-weight signature replaces the scalar baseline comparison with a mechanism-resolved and direction-resolved test.

\subsection{Channel estimation and coherence of noise}

The Pauli transfer matrix is a standard representation of quantum channels
\cite{korotkov2013error,hantzko2024ptm}. Classical-shadow and
randomized-Pauli methods estimate many channel properties from one shared
dataset with cost that depends on operator weight rather than system size
\cite{huang2020shadows,levy2021shadowqpt}. Kaufmann et
al.~\cite{kaufmann2025coherent} relate off-diagonal PTM entries to all
single- and two-qubit coherent rotation coefficients and estimate them on
superconducting hardware; our $C$ and $L$ (Section~\ref{sec:signature}) use
that relation. Wallman et
al.~\cite{wallman2015coherence} estimate how coherent a noise process is,
which is a scalar property. Our contribution is not a new estimation
protocol but a security test on a specific PTM block whose occupancy
separates local from nonlocal coherent errors, which no scalar coherence
measure can do.

\subsection{Quantum image processing}

FRQI encodes an image into rotation angles of a color qubit entangled with
an address register \cite{le2011frqi}, and QHED measures intensity
differences between neighboring pixels with a Hadamard transform
\cite{yao2017qhed,shubha2024edge}. We replace pixel adjacency with a
Pauli-graph adjacency for operator-indexed channel data.

\section{Threat Model}
\label{sec:threat}

\subsection{System model}

Consider an $n$-qubit processor containing a victim qubit $v$ and one or
more spectator qubits. The intended operation on the monitored subsystem is
the unitary channel $\mathcal{U}_{\mathrm{id}}(\rho)=\Uid\rho\Uid^\dagger$.
The actual implementation is a noisy channel $\Eobs$. The defender wishes to
determine whether the difference between $\Eobs$ and
$\mathcal{U}_{\mathrm{id}}$ is consistent with the calibrated device or
contains an additional coherent two-qubit interaction.

\subsection{Calibrated null model}

Let $\mathfrak{N}$ denote the family of channels expected during normal
device operation: depolarizing noise, dephasing, amplitude damping, thermal
relaxation, local coherent miscalibration of the form
$U_{\mathrm{loc}}=\bigotimes_q e^{-i\epsilon_q P_q/2}$, and calibrated
always-on parasitic coupling. The calibrated hardware baseline may contain
persistent two-qubit ZZ coupling and TLS-mediated interactions; such terms
can be estimated using pairwise characterization methods such as JAZZ-type
amplification experiments \cite{oda2026sparse}. Under the alternative
hypothesis below, local rotations composed with an attack satisfy
$|\epsilon_q|\le\epsilon_{\max}$ with
$2[\sin(\epsilon_{\max}/2)+\sin\epsilon_{\max}]<1$; $\epsilon_{\max}=0.3$ rad
suffices and exceeds the $0.15$ rad drift of the finite-shot studies.

The exact structural-zero result of Section~\ref{sec:signature} applies to
products of local unital, trace-preserving channels. The complete hardware
null need not satisfy this product assumption, because amplitude damping is
nonunital and native ZZ is a two-qubit interaction. Operational detection is
therefore performed relative to the measured calibration distribution, not
relative to the mathematical zero alone.

\subsection{Attack model}

The monitored attack family consists of coherent two-qubit interactions
applied during the victim gate. A general interaction on spectator $s$ and
victim $v$ is
\begin{equation}
H_{\mathrm{int}}
=
\sum_{a,b\in\{X,Y,Z\}}
h_{ab}\,
\sigma_a^{(s)}
\otimes
\sigma_b^{(v)},
\label{eq:general-interaction}
\end{equation}
where the nine real coefficients $h_{ab}$ specify the interaction direction.
The corresponding unitary attack is
$U_{\mathcal{A}}(\lambda)=e^{-i\lambda H_{\mathrm{int}}/2}$ with channel
$\mathcal{A}_\lambda(\rho)=U_{\mathcal{A}}\rho U_{\mathcal{A}}^\dagger$.
Here $\lambda$ is the dimensionless interaction angle accumulated over the
monitored gate: for a constant physical coupling of strength $J$ entering
the Hamiltonian as $J H_{\mathrm{int}}/2$ over gate duration
$t_g$, $\lambda=Jt_g$, and for the time-dependent envelopes of
Section~\ref{sec:pulse} it is the corresponding integrated angle. The
evaluated attack families include ZX, YX, and excess ZZ interactions,
together with pulse-level realizations such as detuned cross-resonance
leakage and spectator-conditioned drives.

Against an adaptive adversary who knows the detector, this class is
motivated by pulse-level cross-resonance access, which naturally produces
coherent spectator-conditioned terms that change the channel at first order
\cite{shubha2025stealthy}. Stochastic correlated attacks evade the
antisymmetric signature (Section~\ref{sec:limitations}), and excess ZZ
hidden inside the native fluctuation band is capped at that band
(Section~\ref{sec:nativezz}).

\subsection{Defender capability}

The defender can schedule diagnostic circuits containing the victim gate and
collect randomized Pauli preparation and measurement data under controllable
conditions
$c\in\{\text{idle},\text{driven},\text{calibration time},\text{observation
time}\}$. Scheduling audits and controlling the spectator condition require
provider-side access, so the primary deployment is a provider-run integrity
monitor. A tenant without scheduling privileges can still run the passive
single-condition variant on its own gates, which retains the structural
discriminator but not the context-contrast handle. The deployment assumes
an attack-free calibration window and audits that the adversary cannot
distinguish from tenant jobs; an adversary that recognizes audit windows and
stays silent evades observation (Section~\ref{sec:limitations}).

No privileged access to analog control electronics is required: Clifford
diagnostic gates reduce removal of the intended unitary to classical Pauli
relabeling, and for a non-Clifford gate the residual PTM remains estimable
by linearity at a post-processing cost set by the number of resulting Pauli
terms.

\subsection{Hypothesis test}

The residual error channel is
$\Eerr=\mathcal{U}_{\mathrm{id}}^\dagger\circ\Eobs$. The defender tests
\begin{equation}
\begin{aligned}
H_0&:\ \Eerr\in\mathfrak{N},\\
H_1&:\ \Eerr=\mathcal{D}\circ\mathcal{M}_{\epsilon'}\circ
\mathcal{A}_\lambda\circ\mathcal{M}_{\epsilon},\\
&\quad|\epsilon_q|,|\epsilon'_q|\le\epsilon_{\max},\quad
\lambda>\lambda_{\min},
\end{aligned}
\label{eq:hypothesis-test}
\end{equation}
at a controlled false-positive rate, where $\mathcal{M}_\epsilon$ and
$\mathcal{M}_{\epsilon'}$ are local rotations before and after the
interaction and $\mathcal{D}$ is the remaining calibrated noise in
$\mathfrak{N}$. The angular bound is necessary: a large local rotation, such
as a spectator $\pi$ rotation, can cancel the antisymmetric signature
(Section~\ref{sec:rotation}). Thresholds are selected from held-out
calibration data, and Table~\ref{tab:surface} states which attack classes
each handle addresses.

\begin{table}[tb]
\centering
\caption{Detectability boundary of the proposed protocol.
$\checkmark$ indicates direct sensitivity,
$\times$ indicates that the attack can evade the corresponding test, and
$\circ$ indicates partial sensitivity. Cross-weight refers to the evaluated
antisymmetric statistic $x$; the full-block norm of
Proposition~\ref{prop:full} restores sensitivity in the $\pi$-rotation row.}
\label{tab:surface}
\setlength{\tabcolsep}{3.5pt}
\begin{tabular}{lccc}
\toprule
Attack class & Cross-weight & Temporal & Context \\
\midrule
Coherent two-qubit interaction & $\checkmark$ & $\checkmark$ & $\checkmark$ \\
Interaction $+$ local $\pi$ rotation & $\times$ & $\circ$ & $\circ$ \\
Local coherent drift & $0$ (Prop.~\ref{prop:drift-zero}) & $\circ$ & $0$ \\
Incoherent correlated noise & $\times$ & $\circ$ & $\circ$ \\
Coherent interaction of weight $>k$ & $\times$ & $\circ$ & $\circ$ \\
Excess ZZ inside native band & $\circ$ & $\circ$ & $\circ$ \\
Attack present during calibration & $\times$ & $\times$ & $\circ$ \\
\bottomrule
\end{tabular}
\end{table}

\section{The Structural Pauli-Transfer Signature}
\label{sec:signature}

\subsection{Pauli transfer representation}

The $n$-qubit Pauli basis is
$\mathcal{P}_n=\{P_\alpha=\bigotimes_q P_{\alpha_q}:
P_{\alpha_q}\in\{I,X,Y,Z\}\}$ with
$\Tr(P_\alpha P_\beta)=2^n\delta_{\alpha\beta}$. The weight $\wt(P_\alpha)$
counts the non-identity tensor factors; for example $\wt(IIX)=1$ and
$\wt(IZX)=2$. We monitor the low-weight sector
$\mathcal{L}_k=\{P:\wt(P)\le k\}$ of size
$K=\sum_{w=0}^{k}3^w\binom{n}{w}$; throughout the numerics $n=3$ and $k=2$,
so $K=37$. For fixed $k$, $K$ grows polynomially in $n$, and the low-weight
block is estimable by classical shadows at polynomial cost
\cite{huang2020shadows,levy2021shadowqpt}.

The PTM of the residual channel is
\begin{equation}
R_{\alpha\beta}
=
2^{-n}\,
\Tr\!\left[P_\alpha\,\Eerr(P_\beta)\right],
\label{eq:ptm}
\end{equation}
so that $\Eerr(P_\beta)=\sum_\alpha R_{\alpha\beta}P_\alpha$. The PTM is
real for physical channels, equals the identity for a perfect gate, and has
an orthogonal traceless part for unitary channels
\cite{korotkov2013error,hantzko2024ptm}.

For a coherent error $U_\lambda=e^{-i\lambda G/2}$ generated by a Hermitian
$G$, expanding
$U_\lambda P_\beta U_\lambda^\dagger
=P_\beta-\tfrac{i\lambda}{2}[G,P_\beta]+O(\lambda^2)$
gives $R=I+\lambda C+O(\lambda^2)$, where
$C_{\alpha\beta}=2^{-n}\Tr[P_\alpha(-\tfrac{i}{2})[G,P_\beta]]$ satisfies
$C^T=-C$. Coherent errors therefore appear, at first order, in the
antisymmetric part of the PTM. This motivates the signed component
\begin{equation}
A_{\alpha\beta}
=
\tfrac{1}{2}
\left(
R_{\alpha\beta}-R_{\beta\alpha}
\right),
\label{eq:signed-antisymmetric}
\end{equation}
which retains the sign of each coherent Pauli transition and hence the
direction of the underlying Hamiltonian interaction.

\subsection{The cross-weight signature vector}

Define the oriented index set
$\mathcal{I}_{12}=\{(\alpha,\beta):\wt(P_\alpha)=2,\ \wt(P_\beta)=1\}$ and
the signed cross-weight block $A^{12}_{\alpha\beta}=A_{\alpha\beta}$ for
$(\alpha,\beta)\in\mathcal{I}_{12}$. The structural signature is the vector
\begin{equation}
x
=
\operatorname{vec}\!\left(A^{12}\right)
\in\mathbb{R}^{K_1K_2},
\label{eq:x-definition}
\end{equation}
where $K_1=3n$ and $K_2=9\binom{n}{2}$ count the weight-1 and weight-2
Pauli strings. For $n=3$, $K_1=9$, $K_2=27$, and $x\in\mathbb{R}^{243}$.
Each coordinate of $x$ is one signed coherent transition from a weight-1
input Pauli to a weight-2 output Pauli.

Both ingredients of this definition are essential. Restricting to
cross-weight entries excludes local unital product errors
(Proposition~\ref{prop:drift-zero} below), while antisymmetrization selects
the first-order coherent component and suppresses symmetric stochastic
mixing; unlike the cross-weight restriction, it is not invariant under large
local rotations (Section~\ref{sec:rotation}). Nonunital product noise, such as amplitude damping, can still
create raw cross-weight leakage; this residual contribution is handled
operationally through baseline calibration
(Sections~\ref{sec:detector} and~\ref{sec:separation}).

\subsection{Structural zero for local unital errors}

\begin{proposition}[Local errors have zero cross-weight mass]
\label{prop:drift-zero}
Let $\mathcal{E}=\bigotimes_{q}\mathcal{E}_q$ be any tensor product of
single-qubit channels, each trace preserving and unital. Then every PTM
entry with $\wt(P_\alpha)\neq\wt(P_\beta)$ vanishes, and in particular
$x=0$. This holds for local coherent drift $e^{-i\epsilon P_q/2}$ of any
strength $\epsilon$, on any qubit, about any axis.
\end{proposition}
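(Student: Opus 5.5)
The plan is to exploit the tensor-product structure of the PTM. For a product channel $\mathcal{E}=\bigotimes_q\mathcal{E}_q$ and a product Pauli $P_\beta=\bigotimes_q P_{\beta_q}$, we have $\mathcal{E}(P_\beta)=\bigotimes_q\mathcal{E}_q(P_{\beta_q})$. The normalization $\Tr(P_\alpha P_\beta)=2^n\delta_{\alpha\beta}$ factorizes as $\prod_q 2\,\delta_{\alpha_q\beta_q}$. Hence the PTM factorizes as
\begin{equation}
R_{\alpha\beta}=\prod_q R^{(q)}_{\alpha_q\beta_q},\qquad R^{(q)}_{ab}=\tfrac12\Tr\!\left[P_a\,\mathcal{E}_q(P_b)\right],
\end{equation}
that is, $R=\bigotimes_q R^{(q)}$ with respect to the ordered Pauli basis. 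This reduces the problem to the single-qubit $4\times4$ PTMs.

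Next, I use the two hypotheses on each factor.
\begin{itemize}
\item \emph{Trace preservation} gives $\Tr\,\mathcal{E}_q(P_b)=\Tr P_b$. So $R^{(q)}_{I b}=\delta_{Ib}$: the first row is $(1,0,0,0)$.
\item \emph{Unitality} gives $\mathcal{E}_q(I)=I$. So $R^{(q)}_{aI}=\delta_{aI}$: the first column is $(1,0,0,0)^T$.
\end{itemize}
Thus $R^{(q)}=1\oplus T^{(q)}$, with $T^{(q)}$ a real $3\times3$ block acting on $\{X,Y,Z\}$. In particular $R^{(q)}_{a b}=0$ whenever exactly one of $a,b$ equals $I$.

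Then I conclude the weight statement. If $R_{\alpha\beta}\neq0$, every factor $R^{(q)}_{\alpha_q\beta_q}$ is nonzero. By the block structure, this forces $\alpha_q=I\iff\beta_q=I$ for every $q$. So $P_\alpha$ and $P_\beta$ have the same support, and in particular $\wt(P_\alpha)=\wt(P_\beta)$. Contrapositively, every entry with $\wt(P_\alpha)\neq\wt(P_\beta)$ vanishes. The same holds for $R^T$, so both $R_{\alpha\beta}$ and $R_{\beta\alpha}$ vanish for $(\alpha,\beta)\in\mathcal{I}_{12}$. Hence $A^{12}=0$ by \eqref{eq:signed-antisymmetric}, and $x=0$ by \eqref{eq:x-definition}.

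Finally, I treat the coherent-drift case as an instance. The channel $\rho\mapsto e^{-i\epsilon P_q/2}\rho\, e^{i\epsilon P_q/2}$ is unitary, hence trace preserving and unital, for any $\epsilon$ and any axis. Tensoring with identity channels on the other qubits gives a product channel. Products of several such rotations, even on the same qubit, compose to a single local unitary per qubit, so no linearization in $\epsilon$ is needed. The statement does not mention composition with the intended gate, because $\Eerr$ already has $\mathcal{U}_{\mathrm{id}}$ removed.

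The only mildly delicate step is justifying the PTM factorization: the trace over a tensor product factorizes, and the $2^{-n}$ normalization splits as $\prod_q 2^{-1}$. Everything after that is bookkeeping of the $1\oplus T$ structure.
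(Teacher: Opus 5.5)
Your proof is correct and follows essentially the same route as the paper: factorize the PTM over the tensor product, then use trace preservation to kill $R^{(q)}_{I,P}$ and unitality to kill $R^{(q)}_{P,I}$. Any pair whose identity pattern differs at some site, and in particular any pair of different weights, then has a vanishing product. Your stronger phrasing (nonzero entries require identical support) and your explicit check that both $R_{\alpha\beta}$ and $R_{\beta\alpha}$ vanish, which gives $A^{12}=0$, are only tidier bookkeeping of the same argument.
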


\begin{proof}
The PTM of a product channel factorizes as
$R_{\alpha\beta}=\prod_q R^{(q)}_{\alpha_q\beta_q}$. If
$\wt(P_\alpha)\neq\wt(P_\beta)$, there is a site $q$ where exactly one of
$\alpha_q,\beta_q$ is the identity. At that site the factor is either
$R^{(q)}_{I,P}=\tfrac12\Tr[\mathcal{E}_q(P)]=0$, because the channel is
trace preserving and $P$ is traceless, or
$R^{(q)}_{P,I}=\tfrac12\Tr[P\,\mathcal{E}_q(I)]=0$, because the channel is
unital. Either way the product vanishes.
\end{proof}

This result is exact rather than perturbative: local coherent drift is not
merely small in the monitored block, it is algebraically forbidden from
entering it.

\subsection{First-order response to two-qubit interactions}

\begin{proposition}[Two-qubit generators are visible at first order]
\label{prop:attack-first-order}
Let $\Eerr(\rho)=e^{-i\lambda G/2}\rho\,e^{i\lambda G/2}$ with
$G=P_s\otimes P_v$ a two-qubit Pauli. Then $\|x\|_2=\Theta(|\lambda|)$ as
$\lambda\to0$.
\end{proposition}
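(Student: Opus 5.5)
Because $G$ is itself a Pauli string, the exact PTM is available in closed form, so we avoid the perturbative expansion and read off $x$ directly. Since $G^2=I$, we have $e^{-i\lambda G/2}=\cos(\lambda/2)I-i\sin(\lambda/2)G$. For any Pauli $P_\beta$ there are two cases. If $[G,P_\beta]=0$, conjugation leaves $P_\beta$ fixed. If $\{G,P_\beta\}=0$, then
\begin{equation*}
U_\lambda P_\beta U_\lambda^\dagger=\cos\lambda\,P_\beta-i\sin\lambda\,GP_\beta .
\end{equation*}
The operator $-iGP_\beta$ is Hermitian and equal to $\pm P_{\gamma(\beta)}$, where $P_{\gamma(\beta)}\propto GP_\beta$ is again a Pauli string. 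Hence $R$ is a signed permutation-rotation matrix. Every anticommuting $\beta$ contributes the pair $R_{\gamma\beta}=\pm\sin\lambda$ and $R_{\beta\gamma}=\mp\sin\lambda$, and all other entries are diagonal ($1$ or $\cos\lambda$). Consequently $A_{\gamma\beta}=\pm\sin\lambda$ exactly on these pairs and zero elsewhere.

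The next step is to count the pairs with $\wt(P_\beta)=1$ and $\wt(P_\gamma)=2$. Take $P_\beta$ supported on the victim site as $\sigma_b^{(v)}$ with $\sigma_b\neq P_v$. Then $P_\beta$ anticommutes with $G$, and $GP_\beta\propto P_s\otimes(P_v\sigma_b)$ has weight 2, since $P_v\sigma_b$ is a non-identity Pauli. This gives two such $\beta$ on the victim site and, symmetrically, two on the spectator site. Weight-1 Paulis on other qubits, or on $s$ or $v$ but parallel to the local factor of $G$, commute with $G$ and contribute nothing. Therefore exactly four entries of $A^{12}$ equal $\pm\sin\lambda$ and all others vanish, giving
\begin{equation*}
\|x\|_2=2|\sin\lambda| .
\end{equation*}

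From this closed form, $\|x\|_2=2|\lambda|+O(\lambda^3)$, which is $\Theta(|\lambda|)$ as $\lambda\to0$. This agrees with the first-order picture $R=I+\lambda C$ with $C$ antisymmetric, so $A=\lambda C+O(\lambda^2)$.

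The main obstacle is the orientation and sign bookkeeping. We must verify that each anticommuting weight-1 input really lands in the $(\wt 2,\wt 1)$ block $\mathcal{I}_{12}$ rather than being double counted through the transposed entry. We also need the images $\gamma(\beta)$ to be distinct for the four inputs, so that no cancellation occurs inside a single coordinate. I would check distinctness directly: the four images $P_s\otimes(P_v\sigma_b)$ and $(P_s\sigma_a)\otimes P_v$ are pairwise different Pauli strings.
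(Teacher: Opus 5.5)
Your proof is correct, but it takes a different route from the paper's. The paper argues perturbatively. It expands $R=I+\lambda C+O(\lambda^2)$ and picks one victim Pauli $Q$ anticommuting with $P_v$. Since $[G,Q]=P_s\otimes[P_v,Q]$ is proportional to a weight-2 Pauli, the cross-weight block of the antisymmetric generator $C$ is nonzero, so the first-order coefficient of $x$ is nonzero and $\|x\|_2=\Theta(|\lambda|)$.

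You instead use $G^2=I$ to get the exact conjugation formula. That splits $R$ into identity entries on Paulis commuting with $G$ and $2\times2$ rotation blocks on anticommuting pairs, giving the closed form $\|x\|_2=2|\sin\lambda|$ for every $\lambda$. This yields more than the statement asks:
\begin{itemize}
\item the exact constant $2$, matching Lemma~\ref{lem:isometry} with $\|h\|_2=1$ and the $\epsilon=0$ case of Eq.~\eqref{eq:cancellation};
\item the full count of four contributing weight-1 inputs, which the paper only establishes later in the isometry lemma;
\item the nonperturbative fact that $x$ vanishes only at $\lambda\in\pi\mathbb{Z}$.
\end{itemize}

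What the paper's route buys is linearity in the generator. The first-order map $G\mapsto C$ extends directly to a general $H_{\mathrm{int}}$ with noncommuting terms, which is what Lemma~\ref{lem:isometry} needs. Your closed form relies on $G$ being a single involutory Pauli string.

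Your two bookkeeping concerns resolve automatically. The map $\beta\mapsto\gamma(\beta)$, multiplication by $G$ up to phase, is an injective involution on the Paulis anticommuting with $G$. So the four images are distinct, and each mixed-weight pair contributes exactly one coordinate of $x$. The transposed entry has a weight-1 row and therefore lies outside $\mathcal{I}_{12}$.
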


\begin{proof}[Proof sketch]
Choose a weight-1 Pauli $Q$ on the victim with $\{Q,P_v\}=0$. Then
$[G,Q]=P_s\otimes[P_v,Q]$ is proportional to a weight-2 Pauli, giving a
nonzero entry of the antisymmetric generator $C$ in the weight-1 to
weight-2 block, hence mass proportional to $\lambda$ at first order.
\end{proof}

As a concrete example, take $G=Z_sX_v$ and input $P_\beta=I_sY_v$. Since
$[Z_sX_v,I_sY_v]=2iZ_sZ_v$, conjugation gives
$\cos(\lambda)\,I_sY_v+\sin(\lambda)\,Z_sZ_v$: the weight-1 input acquires a
weight-2 component whose PTM entry scales as
$\sin(\lambda)=\lambda+O(\lambda^3)$.

\subsection{No monitored interaction direction is invisible}
\label{sec:isometry}

Proposition~\ref{prop:attack-first-order} covers a single Pauli product,
but a physical interaction is the general Hamiltonian of
Eq.~\eqref{eq:general-interaction}. Collecting the nine coefficients into
$h\in\mathbb{R}^9$, the first-order signature is linear,
\begin{equation}
x
=
\lambda\,Lh
+
O(\lambda^2),
\label{eq:linear-map}
\end{equation}
with $L:\mathbb{R}^9\to\mathbb{R}^{K_1K_2}$ determined by the Pauli
commutators. Two failure modes would undermine detection: a nonzero
interaction in the null space of $L$ would be invisible, and two interaction
directions could cancel.

Neither occurs, for any system size.

\begin{lemma}[Pair-local isometry]\label{lem:isometry}
For any $n\ge2$, $k\ge2$, and any fixed pair $(s,v)$, $L^{T}L=4I_9$. Thus
$\operatorname{rank}(L)=9$, all nine singular values equal $2$, and the
nonzero entries of $L$ are $36$ values $\pm1$ in rows whose Paulis are
supported on the pair.
\end{lemma}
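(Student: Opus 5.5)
The plan is to compute $L$ explicitly from the first-order generator $C$ of Section~\ref{sec:signature}, and then read off $L^TL$ from the sparsity pattern. By linearity of the commutator, the generator $G=H_{\mathrm{int}}=\sum_{a,b}h_{ab}\,\sigma_a^{(s)}\sigma_b^{(v)}$ gives $R=I+\lambda C+O(\lambda^2)$ with $C^T=-C$. Hence the antisymmetric part satisfies $A=\lambda C+O(\lambda^2)$, and $x_{\alpha\beta}=\lambda C_{\alpha\beta}+O(\lambda^2)$ for $(\alpha,\beta)\in\mathcal{I}_{12}$. It therefore suffices to compute $C_{\alpha\beta}=2^{-n}\Tr[P_\alpha(-\tfrac{i}{2})[G,P_\beta]]$ for every weight-1 input $P_\beta$ and weight-2 output $P_\alpha$.

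\textbf{Step 1: locate the nonzero columns.} I would split the weight-1 inputs by support. If $P_\beta$ acts on a qubit outside $\{s,v\}$, it commutes with every term of $G$, so its column vanishes. If $P_\beta=\sigma_c^{(v)}$, then
$[\sigma_a^{(s)}\sigma_b^{(v)},\sigma_c^{(v)}]=2i\,\varepsilon_{bcd}\,\sigma_a^{(s)}\sigma_d^{(v)}$, which gives
\[
C_{(\sigma_a^{(s)}\sigma_d^{(v)}),\,\sigma_c^{(v)}}=\sum_b h_{ab}\,\varepsilon_{bcd}.
\]
Symmetrically, for $P_\beta=\sigma_c^{(s)}$,
\[
C_{(\sigma_d^{(s)}\sigma_b^{(v)}),\,\sigma_c^{(s)}}=\sum_a h_{ab}\,\varepsilon_{acd}.
\]
All resulting outputs are weight-2 Paulis supported on the pair, so they belong to $\mathcal{L}_k$ whenever $k\ge2$. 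This is the only place where $k\ge2$ enters, and $n$ enters nowhere else; that is why the result holds for every system size.

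\textbf{Step 2: show that no entry mixes coefficients.} Fix an entry in the $v$-side family, indexed by $(a,d,c)$ with $c\ne d$. The Levi-Civita symbol forces $b$ to be the unique third index, so exactly one coefficient $h_{ab}$ appears, with coefficient $\pm1$. The same argument applies on the $s$-side. The two families sit in disjoint columns (inputs on $s$ versus inputs on $v$), so each nonzero entry of $L$ is a single $\pm h_{ab}$.

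Conversely, fix $h_{ab}$. It appears in the $v$-family for the two choices $c\ne b$, and in the $s$-family for the two choices $c\ne a$. That gives exactly four $\pm1$ entries per coefficient, and $36$ in total. Since distinct coefficients occupy disjoint entries, the columns of $L$ have disjoint supports and are therefore orthogonal, each with squared norm $4$. This yields $L^TL=4I_9$, rank $9$, and all singular values equal to $2$.

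The main obstacle I expect is Step 2: carefully confirming that no weight-2 output row receives contributions from two different $h_{ab}$ in the same column. This is exactly what rules out cancellation between interaction directions, and it requires checking the uniqueness of the third Levi-Civita index in both the $s$-side and $v$-side families. A secondary point is the orientation convention of $\mathcal{I}_{12}$ (rows of weight 2, columns of weight 1). Because $C$ is antisymmetric, the transposed block carries the same information, so this convention does not affect the count.
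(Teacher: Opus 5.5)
Your proposal is correct and follows essentially the same route as the paper. For each coefficient $h_{ab}$ you isolate the four anticommuting weight-1 inputs on the pair (two per qubit), observe that each resulting input--output coordinate determines $(a,b)$ uniquely, and conclude that the nine columns of $L$ have disjoint supports of squared norm $4$; your explicit Levi-Civita bookkeeping with $\varepsilon_{bcd}$ and $\varepsilon_{acd}$, together with the remark that the normalized trace makes $n$ irrelevant, simply makes the paper's compressed argument fully concrete.
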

\begin{proof}
For $G_{ab}=\sigma_a^{(s)}\sigma_b^{(v)}$, exactly four weight-1 inputs
anticommute with $G_{ab}$, two on each qubit of the pair, and the
commutator maps each to $\pm1$ times a weight-2 Pauli on the pair; inputs on
other qubits commute with $G_{ab}$. Each input-output coordinate identifies
$(a,b)$, so distinct columns have disjoint supports of squared norm $4$. The
normalized trace in $C$ removes identity factors on other qubits, so larger
registers only add zero rows.
\end{proof}

The same construction applied to the
six local generators $\sigma_a^{(s)}\otimes I$ and $I\otimes\sigma_b^{(v)}$
yields exactly zero columns, consistent with
Proposition~\ref{prop:drift-zero}, and Pauli-algebra checks at
$n=2,\dots,5$ reproduce the lemma exactly. Consequently, every two-qubit
interaction direction produces cross-weight mass,
$\|x\|_2=2|\lambda|\,\|h\|_2+O(\lambda^2)$, no two directions cancel, and
$x$ faithfully encodes the nine-parameter interaction tensor near the
identity; Section~\ref{sec:rotation} treats composition with local
rotations. This answers RQ2 and
justifies the direction-resolved detector of Section~\ref{sec:detector}:
distinct interactions map to distinct, non-cancelling directions in $x$, so
a per-coordinate statistic can separate an attack direction from a different
native-coupling direction.

\subsection{Local-rotation cancellation and full-block invariance}
\label{sec:rotation}

Write $R_{rs}$ for the block of $R$ with weight-$r$ rows and weight-$s$
columns, so that $x=\operatorname{vec}(A^{-}_{21})$ with
$A^{\mp}_{21}=(R_{21}\mp R_{12}^{T})/2$. For $V_\epsilon=e^{-i\epsilon X_s/2}$
composed in either order with $U_{\mathcal A}=e^{-i\lambda Z_sX_v/2}$,
\begin{equation}
\|A^{-}_{21}\|_F=2|\sin\lambda\cos\tfrac{\epsilon}{2}|,\quad
\|A^{+}_{21}\|_F=2|\sin\lambda\sin\tfrac{\epsilon}{2}|,
\label{eq:cancellation}
\end{equation}
and the same holds for YX and ZZ. At $\epsilon=\pi$, $x=0$ for every
$\lambda$, although the channel remains nonlocal when $\sin\lambda\ne0$.
Halving $\|x\|_2$ already requires $|\epsilon|\ge2\pi/3$, a spectator error
with single-qubit average infidelity $(1-\cos\epsilon)/3\ge0.5$ that is
visible in the same-weight block $R_{11}$ and in routine calibration.

\begin{proposition}[Full-block invariance]\label{prop:full}
Let $b_{12}(R)^2=\|R_{21}\|_F^2+\|R_{12}\|_F^2$. For product-unitary
channels $\mathcal V_{\mathrm{in}},\mathcal V_{\mathrm{out}}$ and any channel
$\mathcal E$, the PTM $R'$ of
$\mathcal V_{\mathrm{out}}\circ\mathcal E\circ\mathcal V_{\mathrm{in}}$
satisfies $b_{12}(R')=b_{12}(R)$.
\end{proposition}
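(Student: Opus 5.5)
The plan is to use the composition rule for PTMs together with the block structure of product-unitary PTMs. For channels composed as $\mathcal V_{\mathrm{out}}\circ\mathcal E\circ\mathcal V_{\mathrm{in}}$, the PTM is the matrix product $R'=V_{\mathrm{out}}\,R\,V_{\mathrm{in}}$, where $V_{\mathrm{in}},V_{\mathrm{out}}$ are the PTMs of the product unitaries. The argument then rests on two properties of $V_{\mathrm{in}}$ and $V_{\mathrm{out}}$.

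\textbf{Step 1: product-unitary PTMs are block diagonal in weight.} A product unitary channel is a tensor product of single-qubit unital, trace-preserving channels. Proposition~\ref{prop:drift-zero} therefore applies, and every entry with $\wt(P_\alpha)\neq\wt(P_\beta)$ vanishes. Hence $V=\bigoplus_w V_{ww}$, where $V_{ww}$ is the same-weight block.

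\textbf{Step 2: each same-weight block is orthogonal.} The PTM of a unitary channel is orthogonal with respect to the Hilbert--Schmidt inner product. This follows because $2^{-n}\Tr[(UPU^\dagger)(UQU^\dagger)]=2^{-n}\Tr[PQ]$, so $V^TV=I$. A block-diagonal orthogonal matrix has orthogonal diagonal blocks, so each $V_{ww}$ is orthogonal.

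\textbf{Step 3: the cross-weight blocks transform by orthogonal conjugation.} Reading off blocks of $R'=V_{\mathrm{out}}RV_{\mathrm{in}}$ gives
\[
R'_{21}=(V_{\mathrm{out}})_{22}\,R_{21}\,(V_{\mathrm{in}})_{11},\qquad
R'_{12}=(V_{\mathrm{out}})_{11}\,R_{12}\,(V_{\mathrm{in}})_{22}.
\]
Block diagonality guarantees that no other blocks of $R$ contribute to these products. The Frobenius norm is invariant under left and right multiplication by orthogonal matrices, so each of $\|R_{21}\|_F$ and $\|R_{12}\|_F$ is preserved separately, and summing the squares gives $b_{12}(R')=b_{12}(R)$.

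This also explains why the antisymmetric projection fails to be invariant. The combination $R_{21}-R_{12}^T$ picks up different orthogonal factors on the two terms, since $(V_{\mathrm{out}})_{22}R_{21}(V_{\mathrm{in}})_{11}$ and $(V_{\mathrm{in}})_{22}^TR_{12}^T(V_{\mathrm{out}})_{11}^T$ generally do not coincide. This is consistent with the cancellation in Eq.~\eqref{eq:cancellation}.

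\textbf{Main obstacle.} The only delicate point is Step 2: confirming that the restriction of $V$ to the weight sectors remains orthogonal when the monitored sector is truncated to $\mathcal L_k$. The answer is yes, because the full PTM is block diagonal by weight, so the weight-$1$ and weight-$2$ blocks are the full sector blocks and are not truncated submatrices. The ordering convention for composition, i.e.\ whether the product is $V_{\mathrm{out}}RV_{\mathrm{in}}$ or its transpose, does not affect the conclusion.
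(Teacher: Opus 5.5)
Your proof is correct and takes essentially the same approach as the paper: product unitaries preserve Pauli weight and act orthogonally on each weight sector, so $R'_{21}=(V_{\mathrm{out}})_{22}R_{21}(V_{\mathrm{in}})_{11}$ and $R'_{12}=(V_{\mathrm{out}})_{11}R_{12}(V_{\mathrm{in}})_{22}$, and both Frobenius norms are preserved. You also supply details the paper's one-line proof leaves implicit: the composition rule $R'=V_{\mathrm{out}}RV_{\mathrm{in}}$, block diagonality via Proposition~\ref{prop:drift-zero}, and orthogonality of each diagonal block.
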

\begin{proof}
Product unitaries preserve Pauli weight and act orthogonally on each weight
sector, so $R'_{21}=O_2^{\mathrm{out}}R_{21}O_1^{\mathrm{in}}$ and
$R'_{12}=O_1^{\mathrm{out}}R_{12}O_2^{\mathrm{in}}$.
\end{proof}

In Eq.~\eqref{eq:cancellation}, $b_{12}=2\sqrt2|\sin\lambda|$ for every
$\epsilon$. This full-block countermeasure is not evaluated at finite shots;
all detection results use $x$. For small rotations, a first-order expansion
with the weight-preserving orthogonal PTMs $O$ and $O'$ of
$\mathcal M_\epsilon$ and $\mathcal M_{\epsilon'}$ gives
$\|x\|_2\ge2(1-\eta-\eta')|\lambda|\,\|h\|_2-O(\lambda^2)$, where
$\eta=(\|O_1-I\|_2+\|O_2-I\|_2)/2\le\sin(\epsilon_{\max}/2)+\sin\epsilon_{\max}$
independently of $n$ and $\eta'$ is defined likewise, so the bound of
Section~\ref{sec:threat} prevents first-order cancellation.

\subsection{Feature images and normalization}
\label{sec:features}

The signed vector $x$ carries the security decision. For visualization and
for the quantum encoding of Section~\ref{sec:qhed} we use nonnegative
feature images derived from the PTM. The primary one is the cross-weight
magnitude image $S^{\mathrm{cross,anti}}$, obtained by placing
$|A_{\alpha\beta}|$ in both cross-weight orientations and zero elsewhere.
Three complementary images capture other channel properties:
\begin{align}
S^{\mathrm{anti,tr}}_{\alpha\beta}
&=
\tfrac12\left|R_{\alpha\beta}-R_{\beta\alpha}\right|,
\qquad
P_\alpha,P_\beta\neq I^{\otimes n},
\label{eq:antitr}
\\
S^{\mathrm{nonunital}}_{\alpha}
&=
|R_{\alpha,0}|,
\qquad
P_\alpha\neq I^{\otimes n},
\label{eq:nonunital}
\\
S^{\mathrm{cross,dev}}_{\alpha\beta}
&=
\left|R_{\alpha\beta}-\delta_{\alpha\beta}\right|,
\quad
\{\wt(P_\alpha),\wt(P_\beta)\}=\{1,2\}.
\label{eq:cross}
\end{align}
The antisymmetric-traceless image isolates coherent mixing over the whole
traceless sector; the nonunital image isolates identity-to-Pauli leakage
such as amplitude damping; the cross-weight deviation image collects all
weight mixing, coherent or not. We also use the total deviation
$S^{\mathrm{dev}}_{\alpha\beta}=|R_{\alpha\beta}-\delta_{\alpha\beta}|$ as a
magnitude reference.

The cross-weight masks are symmetric in the two orientations because
nonunital channels leak directionally: product amplitude damping turns a
spectator identity into a $Z$ component, so a weight-1 column acquires
weight-2 rows. A directed mask would report amplitude-damping cross mass
$0$, whereas the symmetric mask reports $\|S^{\mathrm{cross,dev}}\|_F=0.064$
at matched infidelity $0.02$; all results use the symmetric definition.

Each image is normalized in two stages: $\|S\|_F$ measures the amount of
structure, $\widehat S=S/\|S\|_F$ retains its pattern, and FRQI loading uses
$\widetilde S_j=S_j/\max_k S_k\in[0,1]$; the local unital null $S=0$ is
reported directly as the absence of a cross-weight image.

For continuity with matched-infidelity sweeps we also report the normalized
ratio
\begin{equation}
\rca
=
\frac{\|S^{\mathrm{cross,anti}}\|_F}
{\|S^{\mathrm{dev}}_{w=1}\|_F+\|S^{\mathrm{dev}}_{w=2}\|_F+\epsilon_0},
\label{eq:rca}
\end{equation}
where the denominator collects same-weight deviation mass and
$\epsilon_0>0$ guards the zero denominator. The security statements are
statements about $x$; the ratio is a reporting convention. For a pure attack
channel every first-order low-weight entry is cross-weight, so the
denominator is $O(\lambda^2)$ and the ratio admits the closed form
\begin{equation}
\rca^{\mathrm{pure}}(\lambda)
=
\frac{2\sqrt{2}\,\sin\lambda}{6\,(1-\cos\lambda)}
=
\frac{\sqrt{2}}{3}\cot(\lambda/2),
\label{eq:rca-exact}
\end{equation}
verified numerically to $10^{-6}$ over $\lambda\in[0.02,0.8]$ and equal to
$3.105$ at matched infidelity $0.02$.

\section{The Operational Detector}
\label{sec:detector}

The defender deploys a single algorithm on the signed vector
$x\in\mathbb{R}^{243}$ of the residual PTM.

\begin{enumerate}
    \item \emph{Estimate.} Acquire one shared randomized-Pauli
    classical-shadow dataset of $M$ total settings and form the estimate
    $\hat x$ from it. All entries read the same shots, so their estimators
    are correlated (Section~\ref{sec:shots}).

    \item \emph{Calibrate.} From an attack-free window learn per-coordinate
    null means $\mu_j$ and standard deviations $\sigma_j$, regularized as
    $\sigma_j\leftarrow\max(\sigma_j,\kappa\,\mathrm{median}_j\,\sigma_j)$.
    We fix $\kappa=0.25$ before attack evaluation in all experiments so that
    near-zero-variance coordinates cannot dominate the maximum.

    \item \emph{Score.} Compute the direction-resolved statistic
    \begin{equation}
    T
    =
    \max_j
    \frac{|\hat x_j-\mu_j|}{\sigma_j}.
    \label{eq:max-statistic}
    \end{equation}

    \item \emph{Threshold.} Set $\tau$ to the $(1-\alpha)$ quantile of $T$
    over a held-out attack-free window and flag if $T>\tau$.

    \item \emph{Corroborate.} Temporal-differential and context-contrast
    handles (Section~\ref{sec:handles}) serve as secondary confirmation.
\end{enumerate}

The max form is chosen over the plain norm $\|\hat x-\mu\|_2$ because the
isometry of Section~\ref{sec:isometry} makes cross-weight directions faithful
and non-cancelling, so a per-coordinate test can separate an attack
direction from a native-coupling direction. Calibrating $\tau$ on the
empirical distribution of $T$ absorbs the multiplicity of the maximum over
243 correlated coordinates (mean off-diagonal correlation $0.073$, up to
$0.38$): an independence-based threshold gives false-positive rate $0.15$ at
target $\alpha=0.05$, whereas the empirical quantile gives a held-out rate
of $0.013$ (95\% Wilson interval $0.004$--$0.047$) on $150$ independent null
records at $M=8{,}192$.

The detector must not fire when benign drift changes between calibration
and evaluation. For unital product noise with local drift, $x=0$ throughout
(Proposition~\ref{prop:drift-zero}); for nonunital product noise, the drift
PTM acts orthogonally on the cross-weight block, so $\|x\|_2$ is exactly
invariant while the vector rotates by $O(\epsilon)$. Numerically, the vector
changes by at most $10^{-16}$ over $\epsilon\in[0,0.3]$ in the unital case,
and under realistic noise the norm changes by $3\times10^{-17}$ while the
vector moves by about $10^{-4}$, so the calibration window should span the
expected drift range.

\section{Evaluation}
\label{sec:evaluation}

All experiments use $n=3$ qubits and the $k=2$ sector ($K=37$), simulated
with custom Python~3.10 code using NumPy and SciPy, without Qiskit or QuTiP:
PTMs are built from Kraus operators, pulses use time-ordered integration
with $400$ steps per gate, and FRQI/QHED is a $13$-qubit statevector.
Realistic noise is thermal relaxation with $T_1=50~\mu\mathrm{s}$ and
$T_2=70~\mu\mathrm{s}$ over a $200$-ns gate, depolarizing probability
$10^{-3}$, and amplitude damping, composed as
$\mathcal{E}=\mathcal{N}_{T_1,T_2,p}\circ\mathcal{A}_\lambda\circ
\mathcal{M}_\epsilon$. Unless noted, compared families are matched by
numerical bisection in average gate infidelity $r=1-F_{\mathrm{avg}}$, with
$F_{\mathrm{avg}}=(\Tr R+d)/(d(d+1))$ and $d=2^n$, and finite-shot studies
use fixed seeds.

The finite-shot studies follow a common protocol. Null records used to
estimate $\mu_j$ and $\sigma_j$ are disjoint from those used to select
thresholds, and reported false-positive rates use a further held-out set,
preventing optimistic reuse of the null samples. The detection-limit study
of Section~\ref{sec:shots} uses moment, threshold, and per-coupling attack
record counts that decrease from $240/240/140$ at $M=4{,}096$ to
$60/60/35$ at $M=16{,}384$; the held-out false-positive rate of the same
detector is reported in Section~\ref{sec:detector}. The native ZZ study of Section~\ref{sec:nativezz} uses
$200$ records each for moments, threshold selection, and held-out
reporting, with $120$ attack records per coupling. Every record is scored
from a fresh shared-shadow dataset of $M$ single-shot settings.

\subsection{Mechanism separation at matched infidelity}
\label{sec:matched}

We first verify that the feature images of Section~\ref{sec:features}
respond to the mechanism of an error rather than its magnitude.
Table~\ref{tab:test1} and Fig.~\ref{fig:test1} compare a local coherent
rotation, depolarizing noise, dephasing, and amplitude damping at average
gate infidelity $0.02$: the antisymmetric-traceless norm isolates the
coherent channel with a margin of $+1.065$, and the nonunital norm isolates
amplitude damping, neither being visible to the common scalar fidelity.

\begin{table}[tb]
\centering
\caption{Feature Frobenius norms at matched average gate infidelity $0.02$
for $n=3$, $k=2$. The coherent rotation dominates the
antisymmetric-traceless feature, while amplitude damping dominates the
nonunital and cross-weight deviation features.}
\label{tab:test1}
\setlength{\tabcolsep}{2.5pt}
\begin{tabular}{lcccc}
\toprule
Channel
& $\|S^{\mathrm{anti,tr}}\|_F$
& $\|S^{\mathrm{nonunital}}\|_F$
& $\|S^{\mathrm{dev}}\|_F$
& $\|S^{\mathrm{cross,dev}}\|_F$ \\
\midrule
$R_x$ rotation    & 1.1104 & 0.0000 & 1.1231 & 0.0000 \\
Depolarizing      & 0.0000 & 0.0000 & 0.1084 & 0.0000 \\
Dephasing         & 0.0000 & 0.0000 & 0.1225 & 0.0000 \\
Amp.\ damping & 0.0450 & 0.0262 & 0.1318 & 0.0636 \\
\bottomrule
\end{tabular}
\end{table}

\begin{figure}[tb]
\centering
\includegraphics[width=0.88\columnwidth]{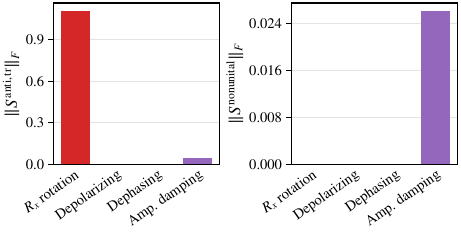}
\caption{Feature norms for four channel families at matched average gate
infidelity $0.02$. The antisymmetric-traceless feature separates coherent
rotation from the tested incoherent channels, while the nonunital feature
identifies amplitude damping.}
\label{fig:test1}
\end{figure}

\subsection{Drift versus attack: the structural separation (RQ1)}
\label{sec:separation}

The central experiment matches local coherent drift against ZX, YX, and ZZ
attacks at average gate infidelities from $0.005$ to $0.06$. In
Fig.~\ref{fig:discriminator} (right), the drift value of $\rca$ is
identically zero, as Proposition~\ref{prop:drift-zero} requires, while all
attacks follow Eq.~\eqref{eq:rca-exact}; at infidelity $0.02$ the attack
value is $3.105$, a zero-versus-nonzero distinction. This answers RQ1 in the
noiseless limit; the rest of this section addresses RQ3.

The figure also retains a failed baseline for contrast. The weight-sector
ratio $r_w$, the edge score of the weight-2 block over the weight-1 block,
fails because local drift on one qubit also produces weight-2 tensor copies:
at matched infidelity $0.02$ the drift value $r_w=13.0$ exceeds the attack
value $r_w=8.0$ (Fig.~\ref{fig:discriminator}, left). Scalar infidelity and
RB error rates do not locate the coherent component, $r_w$ confounds product
and interaction structure, the scalar cross-weight norm loses
directionality, and the direction-resolved statistic of
Eq.~\eqref{eq:max-statistic} retains the interaction geometry.

\begin{figure}[tb]
\centering
\includegraphics[width=\columnwidth]{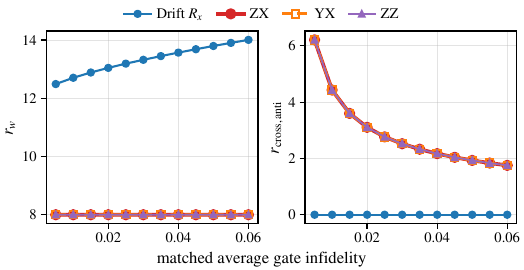}
\caption{Left: the weight-sector ratio $r_w$ fails; local drift exceeds all
attacks at matched infidelity because product drift creates weight-2
copies. Right: the corrected discriminator $\rca$ separates exactly; drift
is identically zero (Proposition~\ref{prop:drift-zero}) while all three
attack families follow
$\rca^{\mathrm{pure}}=\tfrac{\sqrt2}{3}\cot(\lambda/2)$ over the tested
range. This normalized ratio is used only to visualize mechanism
separation; it is not the operational detection statistic, which is the
calibrated test of Eq.~\eqref{eq:max-statistic} on the signed vector $x$.}
\label{fig:discriminator}
\end{figure}

Real devices are not products of unital channels, and amplitude damping
gives a small nonzero raw value ($\rca=0.016$ with no attack under the full
noise model). 

Baseline subtraction against a calibration estimate of the
same device restores the separation: the no-attack differential returns to
zero while ZX, YX, and ZZ attacks retain $\drca\approx1.05$ at infidelity
$0.02$. Under shadow noise at $65{,}536$ settings (200 trials), the
separation stays clean, with attack minimum $0.88$ against a drift and
noise holdout maximum of $0.17$.

\subsection{Temporal and context handles}
\label{sec:handles}

  Two secondary handles, specializations of the context and drift tests
  of \cite{rudinger2019context}, corroborate an alarm.
  The temporal differential
scores $\|S_t-\bar S\|_F$ against a calibration-window mean: slow drift moves
it smoothly, while attack insertion produces a jump. In a 12-frame timeline
with an attack inserted at frame 8, the first attack frame scores $0.91$
against drift frames bounded by $0.30$ (Fig.~\ref{fig:temporal}).

The context contrast compares residual features with the spectator idle
versus driven. Victim-local drift commutes with a spectator-only drive, so
its contrast is exactly zero, while two-qubit attack terms do not commute and
produce a contrast that grows with infidelity (Fig.~\ref{fig:context}). The
handle is scoped to victim-local drift; spectator-local drift is already
removed by the cross-weight signature.

\begin{figure}[tb]
\centering
\includegraphics[width=0.9\columnwidth]{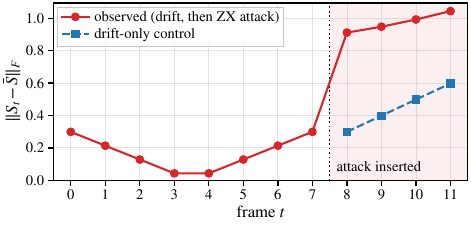}
\caption{Temporal differential over a 12-frame timeline with attack
insertion at frame 8. Drift frames stay below $0.30$; the first attack
frame scores $0.91$.}
\label{fig:temporal}
\end{figure}

\begin{figure}[tb]
\centering
\includegraphics[width=0.85\columnwidth]{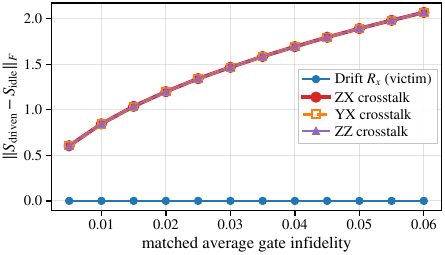}
\caption{Context contrast between driven and idle spectator conditions
across matched infidelities. Victim-local drift cancels exactly; all three
crosstalk families produce nonzero contrast at every tested infidelity.}
\label{fig:context}
\end{figure}

\subsection{Finite measurements and detection limits (RQ3)}
\label{sec:shots}

The low-weight block is estimated from a single shared dataset, as on
hardware and in randomized product-Pauli characterization \cite{crupi2025lowdegree}. Each setting draws one random product-Pauli input eigenstate and
one random measurement basis, is executed with a single shot, and
contributes to every PTM entry it covers; $R_{\alpha\beta}$ is covered when
the bases match the supports of $P_\alpha$ and $P_\beta$, and the two-sided
inverse-shadow factor $3^{w_\alpha+w_\beta}$ cancels the match probability
in expectation. Budgets $M$ count total settings, equal here to total shots,
and measure sample complexity rather than audit latency; the per-entry
cross-weight variance scales as $27/M$, so the noise floor on $\hat x$
scales as $M^{-1/2}$.

Figure~\ref{fig:detlimit} reports the resulting detection limit
$\lambda_{\min}$ of the detector of Section~\ref{sec:detector}, the
smallest coupling reaching true-positive rate at least $0.95$ at the
threshold calibrated for $\alpha=0.05$, as a function of $M$.
The fitted exponents are $-0.51$ for ZX, $-0.56$ for YX, and $-0.52$ for
ZZ, all consistent with the predicted $M^{-1/2}$ scaling, and the three
families track one another, as expected from the isometry of
Lemma~\ref{lem:isometry}. At
$M=16{,}384$ total settings the limit is $\lambda_{\min}\approx0.13$ for
all families under the two-sided accounting. The tested couplings increase
by a factor of about $1.5$, and $\lambda_{\min}$ is log-interpolated
between the two that bracket the $0.95$ crossing, so the limits and
exponents are resolved only to that bracket.

\begin{figure}[tb]
\centering
\includegraphics[width=\columnwidth]{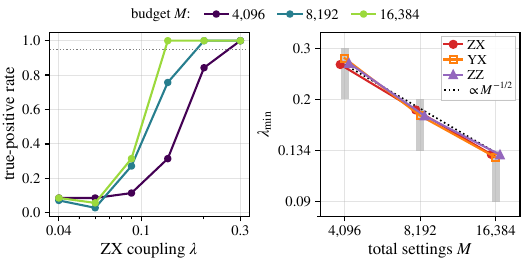}
\caption{Detection limit of the calibrated maximum-$z$ detector under
two-sided shared-dataset shadow estimation. Left: ZX true-positive rate
versus coupling for increasing total budgets $M$. Right: $\lambda_{\min}$
versus $M$ for all three families (markers offset horizontally for
visibility) with the $M^{-1/2}$ guide; gray bars mark the adjacent tested
couplings that bracket the $0.95$ crossing.}
\label{fig:detlimit}
\end{figure}

\subsection{The native ZZ baseline (RQ3)}
\label{sec:nativezz}

Motivated by transmon characterization \cite{oda2026sparse}, the
calibration channel carries an always-on ZZ coupling
$\zeta_{\mathrm{cal}}=0.10$ rad per $200$-ns gate that fluctuates as
$\zeta_{\mathrm{eval}}=\zeta_{\mathrm{cal}}+\delta\zeta$,
$\delta\zeta\sim\mathcal{N}(0,\sigma_\zeta^2)$. The null is native wander
only; the attack adds excess coupling on the same pair. All channels pass
through the two-sided estimator at $65{,}536$ settings, and we compare the
scalar norm with the direction-resolved statistic of
Eq.~\eqref{eq:max-statistic}.

Figure~\ref{fig:nativezz} shows that the resulting hardware floor is
directional, as the geometry of Section~\ref{sec:isometry} predicts. Excess
ZZ points along the native-wander direction, so its limit rises with
instability regardless of statistic, from the shot floor $0.030$ at
$\sigma_\zeta=0.002$ to about $0.080\approx4\sigma_\zeta$ at
$\sigma_\zeta=0.02$. Attacks along other directions populate cross-weight
coordinates that native wander never touches: with the direction-resolved
statistic, ZX and YX hold $\lambda_{\min}\approx0.05$ to $0.08$ essentially
independent of $\sigma_\zeta$, while the scalar norm degrades by about a
factor of two. Tracking the native coupling \cite{oda2026sparse} recovers
part of the collinear component.

\begin{figure}[tb]
\centering
\includegraphics[width=\columnwidth]{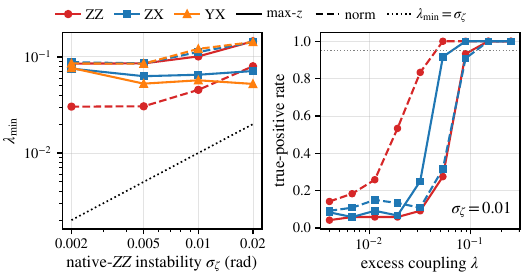}
\caption{Calibrated native ZZ baseline ($\zeta_{\mathrm{cal}}=0.10$,
two-sided shadows, $65{,}536$ settings). Left: $\lambda_{\min}$ versus
instability $\sigma_\zeta$ for excess ZZ, ZX, and YX under the scalar norm
and the direction-resolved statistic. Right: true-positive rate at
$\sigma_\zeta=0.01$. Excess ZZ is floor-limited; ZX and YX under the
direction-resolved statistic stay at the shot floor.}
\label{fig:nativezz}
\end{figure}

Real calibration drift can also be heavy-tailed, temporally correlated, or
dominated by discrete two-level-fluctuator jumps. Repeating the experiment
with the null drawn from a Student-$t$ distribution (three degrees of
freedom), an AR(1) process ($\rho=0.8$), and symmetric telegraph jumps, all
matched to $\sigma_\zeta=0.04$ at $M=16{,}384$, preserves the directional
advantage: off-collinear ZX limits stay in $[0.11,0.16]$ near the shot
floor, while collinear ZZ limits remain instability-limited in
$[0.18,0.30]$. The hardest case is temporally correlated drift, which
narrows the gap and pushes the held-out false-positive rate to $0.10$
against a target of $0.05$, because a colored null is poorly summarized by
per-coordinate moments; strongly autocorrelated nulls require block
calibration, an effective sample-size correction, or a time-series-aware
threshold.

\subsection{Pulse-level validation (RQ3)}
\label{sec:pulse}

Constant-generator unitaries are algebraic stand-ins for real drives, so
we validate the detector on time-dependent models. A Hamiltonian residual
model factors out an intended drive
$H_{\mathrm{id}}=\tfrac{\pi}{4}X_v+\tfrac{\pi}{6}X_s$ before scoring, with
realistic noise; ZX, YX, and ZZ terms give $\drca=0.110$, $0.109$, and
$0.097$ at $0.02$ rad, growing monotonically to $1.61$, $1.61$, and $1.41$
at $0.30$ rad. A cross-resonance pulse whose terms share one Gaussian-square
envelope reproduces these values to three figures, as it must, since a
common envelope commutes with itself; this checks the pipeline, and the
integrator matches the exact exponential to $2.6\times10^{-14}$.

The third model makes time ordering nontrivial: the attack term carries its
own detuned, phase-modulated envelope,
$H(t)=\Omega(t)H_{\mathrm{drive}}
+\tfrac{j}{2}\Omega(t)\cos(\Delta t+\varphi)\,G$, with
$\Delta t_{\mathrm{gate}}\in\{\pi,2\pi,4\pi\}$ and $\varphi=\pi/4$. The
witness $\|U_{\mathrm{pulse}}-e^{-i\int H}\|$ reaches $0.06$, against
$10^{-14}$ for the common envelope, so the evolution departs from the
un-ordered integrated Hamiltonian. Two results follow
(Fig.~\ref{fig:noncomm}). First, for drift-varied nulls with
$\epsilon_{\mathrm{eval}}\in[0.04,0.12]$ around the calibration value
$0.08$, the numerator differential $\Delta\|S^{\mathrm{cross,anti}}\|_F$
stays at $4\times10^{-16}$, the empirical face of the drift invariance of
Section~\ref{sec:detector}, while the ratio moves by $0.028$ through its
denominator. Second, detection persists at every detuning: the envelope
overlap $\chi(\Delta)=|\int\Omega\cos(\Delta t+\varphi)\,dt|/\int\Omega\,dt$
falls from $0.71$ to $0.07$ at $\Delta t_{\mathrm{gate}}=4\pi$, and all
curves remain monotone in $j$ and clear the drift-null band at the smallest
tested coupling.

\begin{figure}[tb]
\centering
\includegraphics[width=\columnwidth]{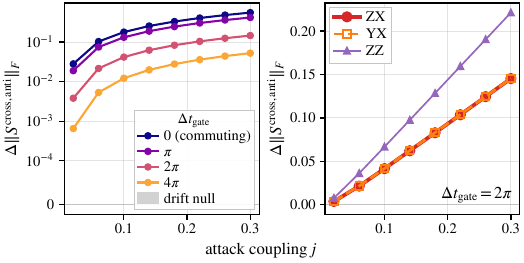}
\caption{Time-ordering-sensitive pulse validation. Left: numerator
differential versus bare ZX coupling $j$ for detunings
$\Delta t_{\mathrm{gate}}\in\{0,\pi,2\pi,4\pi\}$; the shaded band is the
drift-varied null, which is machine zero for the numerator. Right: all
three attack families at $\Delta t_{\mathrm{gate}}=2\pi$. Detection
persists under genuinely non-commuting evolution.}
\label{fig:noncomm}
\end{figure}

\subsection{Resource scaling}
\label{sec:scaling}

Table~\ref{tab:scaling} summarizes growth at fixed $k=2$, with counts for
$n\le4$ taken from the exact graph builder and larger $n$ from closed forms
validated against it. The monitored objects grow polynomially:
$K=\Theta(n^2)$ low-weight Paulis, $\Theta(n^3)$ cross-weight coordinates,
and $\Theta(n^4)$ generic FRQI controlled rotations, with an address
register of only $\Theta(\log n)$ qubits. Each cross-weight coordinate has
a two-sided shadow variance factor of at most $3^{2k}=81$, constant at
fixed $k$. Because one shared dataset estimates all coordinates
simultaneously, the total budget is not a per-entry cost multiplied by the
entry count; the entry count controls classical storage and
post-processing.

Beyond three qubits, commuting Ising residuals
$U=\exp[-i\sum_{(i,j)\in E}\theta_{ij}Z_iZ_j/2]$ admit the exact response
\begin{equation}
R_{Y_iZ_j,X_i}=\sin\theta_{ij}\prod_{k\in N(i)\setminus\{j\}}\cos\theta_{ik},
\label{eq:ising}
\end{equation}
where $N(i)$ denotes the neighbors of qubit $i$. Thus cross-weight
visibility depends on interaction angles and neighborhood degree. Pairwise
full-block norms $b_{ij}$, restricted to weight-1 and weight-2 Paulis
supported on $\{i,j\}$, retain invariance under local-unitary pre- and
postcomposition because product unitaries preserve Pauli support, providing
features for graph-based analysis of distributed coupling anomalies. FRQI/QHED can implement the corresponding graph readout;
larger-system finite-shot performance, localization accuracy, and
end-to-end quantum advantage remain to be evaluated.

\begin{table}[tb]
\centering
\caption{Resource scaling at fixed $k=2$. Counts for $n\le4$ are exact;
$n\ge5$ from closed forms validated at $n\le4$.}
\label{tab:scaling}
\setlength{\tabcolsep}{4pt}
\begin{tabular}{rccccc}
\toprule
$n$ & $K$ & Cross entries & Hamming edges & FRQI qubits & FRQI rot. \\
\midrule
3 & 37  & 243    & 9\,324   & 11 & 1\,369 \\
4 & 67  & 648    & 32\,160  & 13 & 4\,489 \\
5 & 106 & 1\,350 & 82\,680  & 14 & 11\,236 \\
6 & 154 & 2\,430 & 177\,408 & 15 & 23\,716 \\
\bottomrule
\end{tabular}
\end{table}

\section{Quantum-Native Pauli-Graph Edge Readout}
\label{sec:qhed}

The detector operates on the signed vector $x$. RQ4 asks whether the
PTM-derived image admits a quantum-native representation and readout; this
section constructs one and states what it does and does not provide.

\subsection{FRQI encoding of the PTM feature}

Let $\widetilde S_j\in[0,1]$ denote the normalized nonnegative feature
values, padded with zeros to $N_{\mathrm{pad}}=2^m$ entries. The FRQI state
is
\begin{equation}
|\mathrm{FRQI}(S)\rangle
=
\frac{1}{\sqrt{N_{\mathrm{pad}}}}
\sum_{j=0}^{N_{\mathrm{pad}}-1}
|j\rangle
\left(
\sqrt{1-\widetilde S_j^2}\,|0\rangle
+
\widetilde S_j|1\rangle
\right),
\label{eq:frqi-state}
\end{equation}
prepared by controlled rotations with angles
$\theta_j=\arcsin\widetilde S_j$ \cite{le2011frqi}; the address register
identifies a PTM coordinate and the final qubit stores its magnitude.
Postselecting the final qubit on $|1\rangle$ yields
$|\psi_S\rangle=\|S\|_2^{-1}\sum_j S_j|j\rangle$ with success probability
$\sum_j\widetilde S_j^2/N_{\mathrm{pad}}$, which vanishes exactly for the
local unital null. The state keeps magnitude and support but not sign, so
$x$ remains the detection object.

\subsection{Pauli graph and QHED score}

A PTM pixel is indexed by a pair of Pauli strings $(P_\alpha,P_\beta)$, and
lexicographic adjacency between such pairs is physically arbitrary, so the
cyclic-shift neighborhood of standard QHED \cite{yao2017qhed} does not
apply. Our Pauli graph $G=(V,E)$ connects PTM coordinates in which one label
is unchanged and the other differs in one tensor factor; a sparser variant
also requires the differing factors to anticommute. At $n=3$, $k=2$ the two
graphs have $9{,}324$ and $4{,}662$ edges in $18$ and $9$ matchings
(Fig.~\ref{fig:pauligraph}).

\begin{figure}[tb]
\centering
\includegraphics[width=0.92\columnwidth]{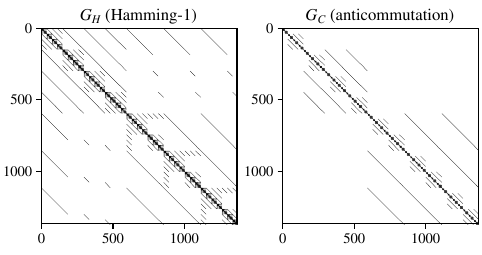}
\caption{Adjacency of the two Pauli graphs on the $K\times K$ low-weight
block at $n=3$, $k=2$ ($K=37$). Left: Hamming-1 graph, $9{,}324$ edges.
Right: anticommutation graph, $4{,}662$ edges.}
\label{fig:pauligraph}
\end{figure}

The normalized Pauli-graph edge score is the graph Dirichlet energy
\begin{equation}
\Aedge(S)
=
\frac{1}{4\sum_j S_j^2}
\sum_{(u,v)\in E}
|S_u-S_v|^2
=
\frac{S^TL_GS}{4\,S^TS},
\label{eq:qhed-edge-score}
\end{equation}
where $L_G$ is the graph Laplacian. QHED implements it through the
matching decomposition: each matching $M_\ell$ defines an involutive
permutation $D_{M_\ell}$ of the address register, and a Hadamard on an
auxiliary qubit, a controlled $D_{M_\ell}$, and a second Hadamard yield
$P_\ell=\tfrac14\sum_j|a_j-a_{D_{M_\ell}(j)}|^2$ for $a_j=S_j/\|S\|_2$, so
$\Aedge(S)=\tfrac12\sum_\ell P_\ell$.

The detector and the QHED score answer different questions: $\|x\|_2$
measures the amount of cross-weight coherent structure, while $\Aedge(S)$
measures how that structure is arranged over the Pauli graph.

\subsection{Circuit verification and ablation (RQ4)}

A gate-level statevector implementation with $11$ address qubits, one
pixel qubit, and one auxiliary qubit executes FRQI loading, postselection,
and the edge primitive on every matching of both graphs, and reproduces the
classical edge score for all tested channels, features, and graphs with
maximum error $1.4\times10^{-14}$ (Fig.~\ref{fig:frqiqhed}), which answers
RQ4 at the level of exact realization.

\begin{figure}[tb]
\centering
\includegraphics[width=0.72\columnwidth]{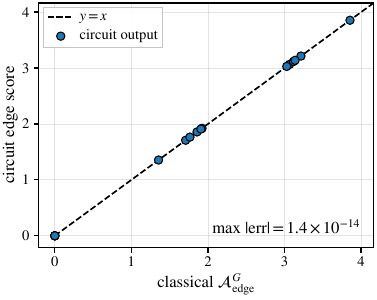}
\caption{Circuit-level verification of the Pauli-graph QHED readout: the
statevector circuit matches the classical edge score across all tested
channels, features, and both graphs (maximum error $1.4\times10^{-14}$).}
\label{fig:frqiqhed}
\end{figure}

Two qualifications bound the claim. First, the implemented route is
classically compiled: the PTM is estimated classically, the FRQI rotations
are compiled from those values, and the graph score is classically
computable in $O(|E|)$ time once $S$ is known, so no quantum speedup is
claimed; a fully coherent route needs a PTM-derived value oracle or
equivalent block encoding. Second, QHED is not the strongest detector: in
the drift-versus-ZX ablation of Table~\ref{tab:ablation}, the structural
statistics dominate at every coupling, while QHED remains the quantum-native
readout of graph morphology.

FRQI/QHED provides a quantum readout of graph-structured coupling features.
Scaling benefits could arise when coherent feature preparation avoids
explicit classical reconstruction, or when reversible access enables
amplitude amplification and estimation; an end-to-end advantage must
account for feature acquisition, state preparation, graph operations, and
competing classical sampling methods. For a unitary residual $U$ with $U$
and $U^{\dagger}$ applicable to the monitored qubits, or equivalently $U$
and $U^{*}$ on the halves of a Bell register $|\Phi\rangle$, Pauli-Bell
encoding of $(UP_\beta U^{\dagger}\otimes I)|\Phi\rangle$ over the $3n$
weight-1 inputs, Bell decoding, and postselection on weight-2 outputs
prepare the normalized signed $R_{21}$ state
$\propto\sum_{\alpha,\beta}R_{\alpha\beta}|\alpha\rangle|\beta\rangle$ with
success probability $\|R_{21}\|_F^2/(3n)$. For pure Ising dynamics on a
bipartite coupling graph, $X$ echo pulses on one partition implement
$U^{\dagger}=U^{*}$. This gives a concrete route to QHED without classical
feature loading; robustness under noisy dynamics and an end-to-end advantage
remain to be established.

\begin{table}[tb]
\centering
\caption{Ablation: drift-versus-ZX detection ROC-AUC under shadow noise.
Structural statistics dominate; QHED serves as the quantum-native
morphology readout.}
\label{tab:ablation}
\begin{tabular}{ccccc}
\toprule
$\lambda$ & $\|x\|_2$ & max-$z$ & graph-TV & QHED edge \\
\midrule
0.05 & 0.599 & 0.542 & 0.571 & 0.551 \\
0.10 & 0.748 & \textbf{0.787} & 0.679 & 0.662 \\
0.15 & 0.936 & \textbf{0.979} & 0.852 & 0.882 \\
\bottomrule
\end{tabular}
\end{table}

\section{Limitations and Countermeasures}
\label{sec:limitations}

\paragraph{Local-rotation evasion}
The evaluated statistic vanishes for an attack composed with a local $\pi$
rotation (Section~\ref{sec:rotation}); the full-block norm removes this
cancellation but still requires finite-shot calibration.

\paragraph{Incoherent attacks}
A stochastic correlated injection, such as random-sign ZZ or correlated
dephasing, produces a Pauli-diagonal PTM with zero cross-weight mass and
evades the discriminator, a boundary inherent to any coherence-based
detector. Magnitude features and the temporal handle give partial coverage,
and purity-based or randomized benchmarking diagnostics target this class.

\paragraph{Weight truncation}
Coherent generators of weight greater than $k$ escape the monitored
sector. Raising $k$ trades polynomially more estimation cost for coverage.

\paragraph{Native-crosstalk mimicry}
Excess ZZ within the native fluctuation band is indistinguishable from
calibration wander, with $\lambda_{\min}\approx4\sigma_\zeta$. The floor is
directional, so non-collinear attacks remain detectable at the shot floor,
and tracking the native coupling \cite{oda2026sparse} tightens the
collinear floor.

\paragraph{Baseline poisoning}
All differential handles assume an attack-free calibration window. An
adversary active during calibration is absorbed into the baseline.
Randomizing calibration scheduling mitigates but does not eliminate this.

\paragraph{Diagnostic-window evasion}
A context-aware adversary who attacks only tenant workloads and stays
silent during audits is not observed. Provider-side deployment should
interleave audits randomly within tenant execution so that audit and
workload gates are statistically indistinguishable to the attacker; we do
not analyze the resulting game, the most important open adversarial
question.

\section{Conclusion}
\label{sec:conclusion}

This work establishes a structural approach to verifying multi-tenant
quantum execution beyond scalar error metrics. For RQ1, products of local
unital errors yield an exactly zero cross-weight PTM block, while every
nonzero two-qubit interaction contributes at first order. For RQ2, the
map from a specified pair's nine interaction coefficients to the
antisymmetric feature is an isometry up to scale for every system size,
ensuring sensitivity to all interaction directions near the identity.
The full cross-weight norm preserves this signal under arbitrary local
unitary composition, extending the structural guarantee beyond the
evaluated detector's small-rotation regime. For RQ3, the calibrated detector uses one shared randomized-Pauli dataset
to identify ZX, YX, and ZZ interactions, reaching
$\lambda_{\min}\approx0.13$ at $16{,}384$ settings with a $5\%$ calibrated
false-positive target and approximately $M^{-1/2}$ scaling. Experiments
support detection under relaxation, depolarization, and pulse dynamics,
while native ZZ fluctuations reveal directional detection floors and
colored drift motivates improved temporal calibration. For RQ4, QHED on
the FRQI-encoded feature reproduces the classical Pauli-graph edge score
within $1.4\times10^{-14}$; coherent feature preparation remains necessary
for an inline quantum diagnostic.

Together, these results connect exact structural guarantees to
measurement-based detection. Operator-indexed features derived from
basis algebra offer a broader route to quantum verification, with
device validation using measured drift and injected pulse perturbations
as the next step.

\newpage
\balance
\bibliographystyle{IEEEtran}
\bibliography{references}

\end{document}